\pgfplotsset{width=6cm,compat=1.9}
\newcolumntype{L}[1]{>{\raggedright\let\newline\\arraybackslash\hspace{0pt}}m{#1}}
\newcolumntype{C}[1]{>{\centering\let\newline\\arraybackslash\hspace{0pt}}m{#1}}
\newcolumntype{R}[1]{>{\raggedleft\let\newline\\arraybackslash\hspace{0pt}}m{#1}}
\newtheorem{theorem}{Theorem}
\newtheorem{corollary}{Corollary}
\newtheorem{proposition}{Proposition}
\newtheorem{remark}{Remark}
\newtheorem{lemma}{Lemma}
\title{Information-Robust Optimal Auctions}
\author{Wanchang Zhang\thanks{Department of Economics, University of California, San Diego.  Email: waz024@ucsd.edu. I thank    Songzi Du and Joel Sobel for helpful comments.}
}
\date{Current Draft: May 8, 2022\\
First Draft: April 20, 2022}
\begin{document}
\maketitle

\begin{abstract}
A single unit of a  good is sold to one of two bidders. Each bidder has either a high prior valuation or a low prior valuation for the good. Their prior valuations are independently and identically distributed.  Each bidder may observe an independently and identically distributed signal about her prior valuation. The seller knows the distribution of the prior valuation profile and knows that signals are independently and identically distributed, but does not know the signal distribution. In addition, the seller knows that bidders play undominated strategies.  I find that a \textit{second-price auction with a random reserve} maximizes the worst-case expected revenue over all possible signal distributions and all equilibria in undominated strategies. 
\vspace{0in}\\
\noindent\textbf{Keywords:} Robust mechanism design, information design, independent private value,   second-price auctions, random reserves, undominated strategies. \\
\noindent\textbf{JEL Codes:} C72, D44, D82.
\end{abstract}
\newpage

\section{Introduction}
The classic auction theory assumes that the seller knows bidders' information structure and derives optimal (revenue-maximizing) mechanisms (e.g., \cite{myerson1981optimal}). However,  optimal mechanisms vary  with the model of bidders' information structure and substantially less is known about how optimal auctions would perform if the model were misspecified. More importantly,   the seller may not know bidders' information structure in practice,  and  therefore it is not clear how the seller should come up with a model of the information structure.\\
\indent To address these issues, I assume that the seller does not know every aspect of bidders' information structure and evaluates a mechanism using the worst-case expected revenue over uncertainties about information structures.\\
\indent More specifically, I consider a model in which two bidders are competing for a single-unit good sold by a revenue-maximizing seller. Each bidder has two possible prior valuations towards the good, either 1 or 0. Their prior valuations are independently and identically distributed. The joint distribution of their prior valuation profile is common knowledge among the seller and the bidders. Each bidder  may observe an independently and identically distributed  signal about her prior valuation. While the information structure (that  signals are independently and identically distributed as well as the signal distribution) is common knowledge between the bidders,  the seller only knows that signals are independently and identically distributed but does not know their distributions. A signal distribution that is consistent with the  the prior valuation distribution is referred to as a \textit{possible} signal distribution. The seller seeks a mechanism from a vast class of mechanisms with the only requirement that the mechanism ``secures'' bidders' participation: there is  a message for each payoff type of each bidder that guarantees a non-negative payoff regardless of the other bidder' messages. In addition, the seller knows that bidders play undominated strategies\footnote{The assumption that bidders play undominated strategies (or ``admissible strategies'') is often considered as a reasonable assumption for an individual's ``rationality'' in the literature on decision theory and game theory. See, for example, \cite{kohlberg1986strategic}. In the literature on implementation theory, \cite{palfrey1991nash} use the concept of undominated Nash equilibrium, which is the same refinement on the set of equilibria considered in this paper. \cite{yamashita2015implementation} studies robust mechanism design problems assuming that bidders play undominated strategies. In contrast to my model, bidders may not play a Bayesian equilibrium in his model. } in a mechanism. The seller believes that an adversarial Nature chooses a possible signal distribution and an equilibrium in undominated strategies to minimize the expected revenue.\\
\indent The joint mechanism design and information design problem is not a standard zero-sum game, as a given mechanism and a given information structure need not have a unique equilibrium in undominated strategies. Moreover, an equilibrium in undominated strategies may not exist at all. I address the issues of equilibrium multiplicity and existence by using a new solution concept, with a flavour similar  to the one used in \cite{brooks2021optimal}: a \textit{maxmin solution} is a triple of a mechanism, a possible signal distribution, and an  equilibrium in undominated strategies such that, i) fixing the mechanism, the possible signal distribution and the equilibrium in undominated strategies minimize the expected revenue,  and  ii) fixing the possible signal distribution, the mechanism and the equilibrium in undominated strategies maximize the expected revenue.    Indeed, these statements remain true regardless of which equilibrium in undominated strategies is played.   The maxmin solution has an associated \textit{revenue guarantee}, which is the expected revenue in the constituent equilibrium under the constituent signal distribution. The revenue 
guarantee is both a tight lower bound on the expected revenue for the mechanism across all
possible signal distributions and all equilibria in undominated strategies and a tight upper bound on the expected revenue for the signal distribution across all mechanisms and all equilibria in undominated strategies.\\
\indent The main result (Theorem \ref{t1}) constructs a maxmin solution. First, the maxmin mechanism is a  \textit{second-price auction with a random reserve}. The distribution of the random reserve is atomless and admits a  density function everywhere on $[0,1]$. Remarkably, a simple dominant-strategy mechanism arises as a robustly optimal mechanism across all participation-securing  mechanisms, which include all dominant-strategy mechanisms as well as  nondominant-strategy mechanisms used in practice, e.g., first-price auctions and all-pay auctions. Therefore, even if nondominant-stategy mechanisms are allowed, the seller would use a dominant-strategy mechanism to maximize the revenue guarantee in my setting, which is a priori unanticipated. The result is consistent with the prevalence of second-price auctions used in practice for selling a good. Indeed, the result provides  a rationale for using a second-price auction from the perspective of robustness. Moreover, the rationale is strong, as it is established across a vast class of mechanisms that incorporates arguably any conceivable practical mechanisms. Second, the minmax signal distribution is an \textit{equal-revenue distribution}, defined by the property  of   a unit-elastic demand: in the monopoly pricing problem, the monopoly's revenue from charging any price in the support of this distribution is the same. Equal-revenue distributions are familiar in several  literatures: they  emerge endogenously  in many robust mechanism design environments and information design environment, e.g., \cite{bergemann2008pricing},  \cite{carrasco2018optimal}, \cite{zhang2021correlation}, \cite{roesler2017buyer}, \cite{condorelli2020information}, \cite{chen2020information}, etc. Finally, the constituent equilibrium is the truth-telling equilibrium in which bidders always truthfully report their true signals.\\
\indent Let me give a heuristic illustration of  the result. I start with  the minmax signal distribution.  This signal distribution has the property that  each bidder's  ``virtual value'' (this is the standard Myerson's virtual value) is zero except when the bidder observes the highest possible signal. Therefore, the seller is \textit{indifferent} between a wide range of mechanisms under this signal distribution. Indeed,  the seller is indifferent between all Bayesian incentive compatible and Bayesian individually rational mechanisms in which 1) the participation constraint is \textit{binding} for a bidder with the lowest possible signal, and 2) the good is \textit{fully} allocated to the bidder(s) with the highest possible signal(s), given that the truth-telling equilibrium is played. This is because the seller is indifferent between allocating and not allocating the good when both bidders' virtual values are zeros. \\
\indent Given that the adversarial Nature chooses the worst-case equilibrium in undominated strategies, second-price auctions would be  natural candidates for a maxmin mechanism, as the truth-telling equilibrium is the unique equilibrium in undominated strategies if the mechanism is a second-price auction (with or without a reserve). In addition, given that the adversarial Nature chooses the worst-case signal distribution, certain randomization device is expected to be employed in a maxmin mechanism as it would hedge against uncertainties over the signal distributions. Hence,  I propose a second-price auction with a random reserve as a candidate for a maxmin mechanism. The distribution of the random reserve is then constructed so that the minmax signal distribution minimizes the expected revenue in the truth-telling equilibrium across all possible signal distributions. More elaborately,  given a second-price auction with a random reserve, the adversarial Nature solves  an expected revenue minimization problem subject to the constraint that the signal distribution is possible. I construct a Lagrangian and use the first-order condition to derive a differential equation that the distribution of the random reserve  satisfies so that the minmax signal distribution is a solution to the Nature's constrained minimization problem. \\
\indent The remainder of the introduction discusses related work. Section \ref{s2} presents the model.  Section \ref{s3} characterizes the main result. Section \ref{s4} discusses and extends the main result.   Section \ref{s5} is a conclusion. 
\subsection{Related Work}
This paper lies at the intersection of several different literatures.  The first literature is the classic auction design literature, initiated by the seminal work of \cite{myerson1981optimal} who characterizes optimal auctions in the independent private value environment. \cite{cremer1988full}  characterize optimal auctions in the private value environment for generic correlation structures. Strikingly, the seller is able to extract the full surplus by carefully design side bets.\\
\indent While these results are of significant theoretical interest, they rely on the common knowledge assumption. The ``Wilson doctrine''  \citep{wilson1987game}  motivates the robust mechanism design literature that searches for economic institutions not sensitive to unrealistic assumptions about the information structure.
This paper contributes to the robust mechanism design literature. My model is indeed equivalent to the private value model in which the seller knows that bidders' valuations are independently and identically distributed and knows the mean of each bidder's valuation. In this regard, the closest related work is \cite{suzdaltsev2020distributionally}, who considers a model of auction design using exactly the same framework,  and characterizes the optimal deterministic reserve price for a second-price auction. He shows that it is optimal to set the reserve price to seller's own valuation, which is zero in my setting. In contrast, I do not place any restrictions on the mechanism except for a participation security constraint,  and characterize a maxmin mechanism for the two-bidder case. Importantly, the revenue guarantee under my proposed mechanism is strictly higher than that under his mechanism for the two-bidder case. In this sense, this paper complements his work.\\
\indent This paper is closely related to \cite{che2019distributionally}, \cite{brooks2021maxmin},  \cite{he2022correlation} and \cite{zhang2021correlation}. \cite{che2019distributionally} considers a model of auction design in the private value environment and assumes that the seller only knows the mean of bidders' valuation distribution. He characterizes a second-price auction with a random reserve as a maxmin mechanism within a class of mechanisms termed as   competitive mechanisms. Interestingly, the formats of maxmin mechanisms in both papers are second-price auctions, albeit with different random reserves. In this regard, this paper provides further support on using  second-price auctions for selling a good. In addition, we both assume that bidders play undominated strategies.  However, there are several differences. First, the seller in his model does not know how bidders' valuations are correlated, whereas the seller in mine knows that  bidders' valuations are independently and identically distributed. That is, the seller knows more in my model, and therefore the revenue guarantee in my model is an upper bound of the one in his model. Second, my solution concept is stronger in that I allow the seller to choose a mechanism from the class of all participation-securing mechanisms, which is a strict superset of the class of competitive mechanisms. \\
\indent \cite{brooks2021maxmin} consider a model of auction design in the interdependent value environment and assume that the seller knows only the mean of bidders' prior valuation. They finds, among others,  that  a proportional auction,  in which the aggregate
allocation is equal to the minimum of the sum of bidders' signal and 1,  and each bidder’s individual
allocation is proportional to their signal, is a maxmin mechanism across all participation-securing mechanisms for the symmetric case. In their model, the seller    knows neither the joint distribution of  bidders' prior valuation profile nor how bidders' signals are correlated. This is in sharp contrast to my model, in which the seller  knows the joint distribution of bidders' prior valuation profile and knows that  bidders' signals about  their prior valuations  are independently and identically distributed.  Therefore, the revenue guarantee is an upper bound of the one in their model. Indeed, they find that in the minmax information structure, bidders’ prior valuation are perfectly
correlated, so essentially a common value model emerges endogenously. Hence, our  methodologies differs. In their model, the Nature's minimization problem is a linear program using a powerful tool of Bayes correlated equilibrium in \cite{bergemann2013robust}, whereas the Nature's minimization problem is a non-linear program in mine. In addition, the solution concept in their paper is stronger than the one in mine, as there are no restrictions on the set of equilibria that the Nature can choose. \\
\indent \cite{he2022correlation} and \cite{zhang2021correlation} both consider a model of auction design in the private value environment and assume that the seller knows the marginal distribution of a generic bidder's valuation but does not knows the correlation structure between bidders' valuations. Under different conditions on the marginal distributions, they characterize, among others, that second-price auctions with random reserves are maxmin mechanisms within (standard) dominant-strategy mechanisms. One of the main differences  is that the sets over which  the seller evaluates worst-case expected revenue  are  different: the seller knows the marginal distribution but not the correlation structure in those two papers, whereas the seller knows the correlation structure but not the marginal distribution (except for the mean) in this paper. In addition, the solution concept in this paper is stronger: the class of mechanisms considered in this paper is much wider than the class of  dominant-strategy mechanisms.\\ 
\indent This paper is also related to \cite{carrasco2018optimal}, \cite{zhang2021robust} and \cite{zhang2022robust}. \cite{carrasco2018optimal} consider a model of monopoly selling in which the seller sells a good to a single buyer when the seller knows an arbitrary number of moment conditions. When there is only one bidder, my model is reduced to to their special case in which the seller knows only the mean. However, adding a second bidder with independently and identically distributed valuations, there is strictly more competition in my model. Therefore, the seller guarantees a higher revenue in my model. \cite{zhang2021robust} considers a model of bilateral trade and characterizes optimal dominant-strategy mechanisms when the profit-maximizing  intermediary knows only the mean of each trader's valuations.  \cite{zhang2022robust} considers a model of public good provision and characterizes optimal  dominant-strategy mechanisms when the profit-maximizing  principal knows only the mean of each agent's valuation. One of the main differences from those two papers is that this paper studies auction designs. Moreover, the designers in those two papers knows less: the designers do not know how agents' valuations are correlated. \\
\indent Finally, this paper is  related to the information design literature: see \cite{kamenica2019bayesian} and \cite{bergemann2019information} for recent surveys. The closest related paper in this literature is \cite{chen2020information} who study information design problems in the auction model. More specifically, they assume that bidders acquire independently and identically distributed signals. The seller, after observing the choice of the  distribution  but not the signal  realizations, designs a revenue-maximizing mechanisms.    They find, among others, that the seller-worst information structure is an equal-revenue distribution for the two-bidder case. It has been an open question  whether strong duality holds in the independent private value model. This paper provides a positive answer to (a version of) this  question for the two-bidder case. Indeed, the result in this paper implies that the seller-worst information structure is an equal-revenue distribution for the two-bidder case. In addition, when there are more than two possible prior valuations, I provide a sufficient condition such that the  result still holds (Corollary \ref{c1}), which implies that the seller-worst information structure is an equal-revenue distribution for the two-bidder case when that sufficient condition holds. That sufficient condition covers many prior valuation distributions that are not covered in their paper. In this regard, this paper complements their work. 
\section{Model}\label{s2}
\subsection{Information}
A seller sells a single unit of good.  For exposition, I assume that the supply cost of the good is zero. There are two bidders, indexed by $i\in \{1,2\}$,   competing for the good.   Each bidder’s prior valuation, denoted by $v_i$, is identically and independently drawn
from a Bernoulli distribution $F$ on $\{0,1\}$. I denote by $v=(v_1,v_2)$ the prior valuation profile.  I denote by $F\times F$ the joint distribution of the prior valuation profile. $F\times F$ is common knowledge among the seller and the bidders.  I denote  by $\mu = E[v_i ] = Pr(v_i = 1)$ the  expectation of each bidder's prior valuation. To rule
out trivial cases, I assume that $\mu \in (0,1)$.\\
\indent Each bidder $i$ may observe an independently and
identically distributed signal $s_i$ about $v_i$.   The fact that $s_1$ and $s_2$ are independently and
identically distributed and the signal distribution are  common knowledge between the two bidders. In contrast, the seller  knows that $s_1$ and $s_2$ are independently and
identically distributed but   does not know the signal distribution. In addition, bidders play undominated strategies, which is known by the seller.  Following \cite{roesler2017buyer}, I say a signal distribution is \textit{possible} if each signal of a bidder
provides her with an unbiased estimate about her valuation, or $E[v_i|s_i]=s_i$. Using Blackwell's characterization 
\citep{blackwell1953equivalent}, the prior valuation distribution $F$ is a mean-preserving spread of a possible signal distribution. I denote the set of possible signal distributions by $\mathcal{G}_F$ with a typical element $G$.  Because $F$ is a Bernoulli distribution on $\{0,1\}$, a signal distribution is possible if and only if the mean is $\mu$.  Formally, \[\mathcal{G}_F=\{G:[0,1]\to [0,1]|\int _0^1sdG(s)=\mu \quad\text{and $G$ is a CDF}\}.\] 
\subsection{Mechanism}
\indent A \textit{mechanism} $\mathcal{M}$ consists of measurable sets of messages $M_i$ for each $i$ and measurable allocation rules $q_i:M\to [0,1]$ and measurable payment rules: $t_i:M\to \mathbb{R}$ for each $i$, where $M=M_1 \times M_2$ is the set of message profiles, such that $q_1(m)+q_2(m)\le 1$ for each $m\in M$.  Bidders' preferences are quasilinear. Given a mechanism $\mathcal{M}$ and a simultaneously submitted message profile $m$, bidder $i$ with a signal of $s_i$ has a utility  \[
U_i(s_i,m)=s_i\cdot q_i(m)-t_i(m).\tag{1}\label{1}\]
 I require the mechanism to satisfy a   \textit{participation security} constraint: For each $i$, there exists a message $m_i\equiv 0\in M_i$ such that for each $s_i\in [0,1]$ and each $m_{-i}\in M_{-i}$, \[U_i(s_i,(0,m_{-i}))\ge 0.\tag{PS}\label{ps}\] Bidder $i$ with a signal $s_i$ can guarantee a non-negative utility  by sending this message, regardless of messages sent by the other bidder.
 \subsection{Equilibrium}
 \indent Given a mechanism $\mathcal{M}$ and a signal distribution $G$, I have a game of incomplete information. A \textit{Bayes Nash Equilibrium} (BNE) of the game is a strategy profile $\sigma=(\sigma_i)$, $\sigma_i:[0,1]\to \Delta(M_i)$, such that $\sigma_i$ is best response to $\sigma_{-i}$: let $U_i(s_i,\mathcal{M},G, \sigma)=\int_{s_{-i}}U_i(s_i, (\sigma_i(s_i),\sigma_{-i}(s_{-i})))dG(s_{-i})$ where $U_i(s_i, (\sigma_i(s_i),\sigma_{-i}(s_{-i})))$ is the multilinear extension of $U_i$ in Equation \eqref{1}, then for any $i,s_i,\sigma_i'$,
    \[U_i(s_i, \mathcal{M},G, \sigma)\ge U_i(s_i, \mathcal{M}, G, (\sigma_i',\sigma_{-i})).\tag{BR}\label{br} \]
The set of  all Bayes Nash Equilibria in undominated strategies for a given mechanism $\mathcal{M}$ and a given signal distribution $G$ is denoted by $\Sigma(\mathcal{M},G)$.\\
\indent Given a mechanism $\mathcal{M}$, the expected revenue at a signal distribution $G$ and an equilibrium $\sigma$ is 
\[R(\mathcal{M}, G,\sigma)=\int_{s_1}\int_{s_2}[t_1(\sigma_1(s_1),\sigma_2(s_2))+t_2(\sigma_1(s_1),\sigma_2(s_2))]dG(s_2)dG(s_1).\]
I refer to the integrand as the \textit{interim revenue} given the signal profile $(s_1,s_2)$ at the equilibrium $\sigma$.
\subsection{Solution concept}
 I adopt the following solution concept:\\
\indent A \textit{maxmin solution} is a triple $(\mathcal{M},G,\sigma)$  of a  mechanism, a possible signal distribution, and a strategy profile, with profit $R=R(\mathcal{M}, G,\sigma)$, such that the following conditions are satisfied:\\
\indent\indent C1. For any possible signal distribution $G'$ and any equilibrium in undominated strategies $\sigma'$  in $\Sigma(\mathcal{M},G')$,  $R\ge R(\mathcal{M}, G',\sigma')$.\\
\indent\indent C2. For any mechanism $\mathcal{M}'$ and any equilibrium in undominated strategies $\sigma'$  in \indent\indent\indent $\Sigma(\mathcal{M}',G)$, $R\ge R(\mathcal{M}', G,\sigma')$.\\
\indent\indent C3. Strategy profile $\sigma$ is in $\Sigma(M,G)$.\\
I refer to $R$ as the revenue guarantee of the solution. Note that if $\Sigma(\mathcal{M},G')=\emptyset$, then the condition C1 holds trivially; similarly, if $\Sigma(\mathcal{M}',G)=\emptyset$, then the condition C2 holds trivially. \\
\indent My solution concept is similar to but weaker than the ``strong maxmin solution'' in \cite{brooks2021optimal}: On the one hand,  we both allow general participation-securing mechanisms. On the other hand,  I restrict attention to equilibria in undominated strategies, whereas they do not place any restrictions on  equilibria. 
\section{Main Result}\label{s3}
In this section, I first formally define a mechanism $\bar{\mathcal{M}}$ (Section \ref{s31}),  a possible signal distribution  $\bar{G}$ (Section \ref{s32}) and a strategy profile $\bar{\sigma}$ (Section \ref{s33}), then I present the formal statement of the result (Section \ref{s34}) that the proposed mechanism, the proposed signal distribution together with the proposed strategy profile constitute a maxmin solution. Finally,  I prove the formal statement (Section \ref{s35}).
\subsection{Mechanism $\bar{\mathcal{M}}$}\label{s31}
The mechanism $\bar{\mathcal{M}}$ is a second-price auction with a random reserve whose cumulative distribution function $\bar{H}$ is as follows:
\[\bar{H}(x)=\left\{
\begin{array}{lll}
-\frac{x(1-a)(\ln{x}-\ln{a})}{(x-a)\ln{a}}   &      & {\text{if $x\in (0,a)\cup (a,1]$},}\\
-\frac{1-a}{\ln{a}}    &      & {\text{if $x=a$},}\\
0    &      & {\text{if $x=0$},}
\end{array} \right.\]
where $a\in (0,1)$ is the unique solution to \[\tilde{a}(1-\ln{\tilde{a}})=\mu. \tag{2}\label{2}\]
\indent To see that $\bar{H}$ is a distribution on $[0,1]$, note first that $\bar{H}(x)$ is continuous, following from $\lim_{x\to a}\bar{H}(x)=\bar{H}(a)$ and $\lim_{x\to 0}\bar{H}(x)=\bar{H}(0)=0$ using L'H\^opital's rule. Second, it can be shown that $\bar{H}(x)$ is strictly increasing (Lemma \ref{l1}). Third, $\bar{H}(1)=1$.  In addition, it is straightforward to show that $\bar{H}(x)$ is differentiable at any $x\neq 0$ using L'H\^opital's rule, i.e., $\bar{H}$ is a continuous distribution (Lemma \ref{l1}).   See Figure \ref{fig:S1} for an illustration.
\begin{figure}
\centering
\begin{tikzpicture}
\begin{axis}[
    axis lines = left,
    xmin=0,
        xmax=1.3,
        ymin=0,
        ymax=1.3,
        xtick={0,1,1.3},
        ytick={0,1,1.3},
        xticklabels = {$0$,  $1$, $x$},
        yticklabels = {$0$,  $1$, $\bar{H}(x)$},
        legend style={at={(1.1,1)}}
]

\addplot[domain=0:1,color=black, name path=A] {x^0.5};
\addplot[dashed] coordinates {(0, 1) (1,1)};
\addplot[dashed] coordinates {(1, 0) (1,1)};
\end{axis}
\end{tikzpicture}
\caption{CDF of the Random Reserve in $\bar{\mathcal{M}}$} \label{fig:S1}
\end{figure}
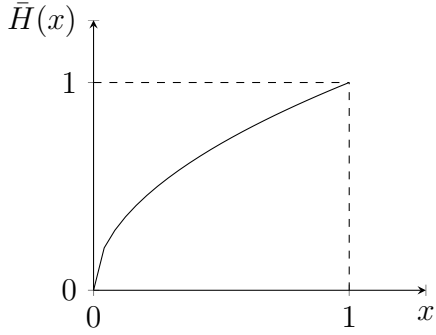
\begin{lemma}\label{l1}
$\bar{H}(x)$ is strictly increasing. In addition, $\bar{H}(x)$ is differentiable at any $x\neq 0$. Moreover, $\lim_{x\to 0}x\bar{H}'(x)=0$.
\end{lemma}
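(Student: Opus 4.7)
The plan is to work directly from the explicit formula for $\bar{H}$ by writing it as $\bar{H}(x) = c \cdot \dfrac{x(\ln x - \ln a)}{x-a}$ with constant $c = -\frac{1-a}{\ln a} > 0$, and analyzing this expression near the three potentially problematic points $x = 0$, $x = a$, and generic $x \in (0,1]$.

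First I would handle differentiability. On $(0,a) \cup (a,1]$ the function is a composition and ratio of smooth functions with nonzero denominator, so differentiability is immediate and one can compute
\[
\bar{H}'(x) \;=\; c \cdot \frac{x - a - a\ln(x/a)}{(x-a)^2}.
\]
At the potentially singular point $x = a$, I would substitute $x = a + h$ and expand $\ln(1 + h/a) = h/a - h^2/(2a^2) + O(h^3)$ to get $\frac{\ln x - \ln a}{x-a} = \frac{1}{a} - \frac{x-a}{2a^2} + O((x-a)^2)$; multiplying by $-\frac{x(1-a)}{\ln a}$ shows $\bar{H}$ is in fact smooth (in particular differentiable) at $x = a$, with $\bar{H}'(a) = c/(2a)$, recovering the value of $\bar H'$ obtained from the formula above by L'Hôpital applied twice.

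For strict monotonicity, the sign of $\bar H'(x)$ reduces to the sign of $g(x) := x - a - a\ln(x/a)$. Since $g(a) = 0$ and $g'(x) = 1 - a/x$ vanishes only at $x = a$, being negative on $(0,a)$ and positive on $(a,1]$, the function $g$ attains a strict global minimum of $0$ at $x = a$, so $g(x) > 0$ for all $x \in (0,1] \setminus \{a\}$. Combined with the earlier computation $\bar{H}'(a) = c/(2a) > 0$, this yields $\bar{H}'(x) > 0$ on all of $(0,1]$, giving strict monotonicity.

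For the final claim, I would write
\[
x\bar{H}'(x) \;=\; c \cdot \frac{x\bigl(x - a - a\ln(x/a)\bigr)}{(x-a)^2}.
\]
As $x \to 0$, the denominator tends to $a^2 \neq 0$, and in the numerator $x(x-a) \to 0$ while $-ax\ln(x/a) = -ax\ln x + ax\ln a \to 0$ since $x\ln x \to 0$. Hence the whole expression tends to $0$, completing the proof. The only step requiring any care is handling $x = a$, but the Taylor expansion makes this routine rather than a genuine obstacle; the lemma is essentially a verification.
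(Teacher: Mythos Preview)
Your proposal is correct and follows essentially the same route as the paper: compute $\bar H'(x)$ explicitly, reduce positivity of the numerator to showing $g(x)=x-a-a\ln(x/a)$ (equivalently $J(x)-\mu$ in the paper's notation, since $\mu=a-a\ln a$) has a strict minimum of $0$ at $x=a$, treat the point $x=a$ separately, and evaluate $x\bar H'(x)$ directly as $x\to 0$. The only cosmetic difference is that you use a Taylor expansion at $x=a$ where the paper invokes L'H\^opital twice; the conclusions (including $\bar H'(a)=c/(2a)=-\tfrac{1-a}{2a\ln a}$) coincide.
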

\begin{proof}
For any $x\neq 0$ or $a$, $\bar{H}'(x)=-\frac{(1-a)(x-a\ln{x}-\mu)}{(x-a)^2\ln{a}}$. Define $J(x)\equiv x-a\ln{x}$. Because $J'(x)=1-\frac{a}{x}$ and $J''(x)=\frac{a}{x^2}>0$, $J(x)$ is minimized at $x=a$ and the minimized value is equal to $a-a\ln{a}=\mu$. Therefore for any $x\neq 0$ or $a$, $\bar{H}'(x)>0$. In addition, $\lim_{x\to a}\bar{H}'(x)=-\frac{1-a}{2a\ln{a}}>0$ using L'H\^opital's rule (twice). Moreover, $\lim_{x\to 0}x\bar{H}'(x)=-\frac{(1-a)(x^2-ax\ln{x}-x\mu)}{(x-a)^2\ln{a}}=0$ using L'H\^opital's rule.
\end{proof}
\indent More formally, the mechanism  $\bar{\mathcal{M}}=(\bar{M},\bar{q}_i,\bar{t}_i)_{i\in\{1,2\}}$ is defined as follows. It is a direct mechanism, i.e., $\bar{M}=[0,1]^2$. With slight abuse of notations, I  denote by  $(s_1,s_2)$ the reported message profile. If $s_1>s_2$, then $\bar{q}_1(s_1,s_2)=\bar{H}(s_1)$, $\bar{q}_2(s_1,s_2)=0$,  $\bar{t}_1(s_1,s_2)=s_1\bar{H}(s_1)-\int_{s_2}^{s_1}\bar{H}(x)dx$, $\bar{t}_2(s_1,s_2)=0$; if $s_1<s_2$, then $\bar{q}_1(s_1,s_2)=0$, $\bar{q}_2(s_1,s_2)=\bar{H}(s_2)$, $\bar{t}_1(s_1,s_2)=0, \bar{t}_2(s_1,s_2)=s_2\bar{H}(s_2)-\int_{s_1}^{s_2}\bar{H}(x)dx$; if $s_1=s_2=x$, then $\bar{q}_1(s_1,s_2)=\bar{q}_2(s_1,s_2)=\frac{\bar{H}(x)}{2}$,  $\bar{t}_1(s_1,s_2)=\bar{t}_2(s_1,s_2)=\frac{x\bar{H}(x)}{2}$.\\
\subsection{Signal distribution $\bar{G}$}\label{s32}
$\bar{G}$ is an equal-revenue distribution as follows:
\[\bar{G}(x)=\left\{
\begin{array}{lll}
1-\frac{a}{x}   &      & {\text{if $a\le x<1$},}\\
1   &      & {\text{if $x=1$}.}
\end{array} \right.\]
Note that $\bar{G}$ is a possible signal distribution because $a$ is a solution to Equation \eqref{2}. Define $\bar{R}\equiv 2a-a^2$. As will be shown (Section \ref{s352}), $\bar{R}$ is an upper bound of the expected revenue for any participation-securing mechanism and any equilibrium under the signal distribution $\bar{G}$. 
\subsection{Strategy profile $\bar{\sigma}$}\label{s33}
Finally, let $\bar{\sigma}$ be the truth-telling strategy profile in the mechanism $\bar{\mathcal{M}}$ under the signal distribution $\bar{G}$: for all $i$ and $s_i$, $\bar{\sigma}_i(s_i)$ puts probability one on $s_i$. This completes the construction of the solution. 
\subsection{Formal statement: Theorem \ref{t1} }\label{s34}
\begin{theorem}\label{t1}
 $(\bar{\mathcal{M}},\bar{G},\bar{\sigma})$ is a maxmin solution with a revenue guarantee of $\bar{R}$.
\end{theorem}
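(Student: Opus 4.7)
The plan is to verify the three defining conditions C1, C2, C3 of a maxmin solution together with the revenue identity $R(\bar{\mathcal M}, \bar G, \bar\sigma) = \bar R$. Start with C3 and the identity. Because $\bar{\mathcal M}$ is a second-price auction with a random reserve, bidding one's own signal is weakly dominant and every other bid is weakly dominated, so $\bar\sigma$ is the unique profile of undominated strategies and belongs to $\Sigma(\bar{\mathcal M}, \bar G)$. For the revenue computation, the atom structure of $\bar G$ at $s=1$ gives interim allocation $\bar{Q}_i(1) = (1-a)\cdot 1 + a \cdot \tfrac{1}{2} = 1 - a/2$; applying the envelope theorem to truth-telling under $\bar G$, together with the atom/density decomposition of $\bar G$, yields $E[\bar t_i] = a\,\bar{Q}_i(1)$, so the total revenue is $2a(1 - a/2) = 2a - a^2 = \bar R$.

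For C2, fix the signal distribution $\bar G$. The critical property is that $\bar G$ is an equal-revenue distribution whose atomless part on $[a,1)$ has virtual value identically zero. For any participation-securing mechanism $\mathcal M'$ and any BNE $\sigma'$ in undominated strategies, the revelation principle delivers an equivalent direct mechanism that is Bayesian incentive compatible and, because participation security combined with best response forces equilibrium interim utility to be non-negative at every signal, Bayesian individually rational. Applying the envelope formula together with $U_i(a) \geq 0$, the expected payment from each bidder reduces to $E[t_i] \leq a\,Q_i(1)$. Feasibility constraints $q_1(1, s_2) \leq 1$, $q_2(s_1, 1) \leq 1$, and $q_1(1,1) + q_2(1,1) \leq 1$ give $\sum_i Q_i(1) \leq 2(1 - a) + a = 2 - a$, so the revenue is at most $a(2 - a) = \bar R$.

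The main step is C1, fixing $\mathcal M = \bar{\mathcal M}$. Weak dominance forces every undominated BNE under every possible $G$ to be truth-telling, so the revenue under $G$ is a pure functional $R(G)$. Using Myerson's envelope representation for truth-telling one obtains a quadratic-in-$G$ expression with representative form $R(G) = 2 \int_0^1 [s g(s) - (1 - G(s))] \bar H(s) G(s)\,ds$ on the atomless part of $G$. The task becomes showing that $\bar G$ minimizes $R(G)$ over $\mathcal G_F$. The plan is a Lagrangian/KKT argument: attach multipliers $\lambda$ and $\nu_0$ for the constraints $\int_0^1 s\,dG = \mu$ and $\int_0^1 dG = 1$, and compute the first variation $\psi(s)$ of $R$ at $\bar G$. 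Optimality of $\bar G$ requires $\psi(s) = \lambda s + \nu_0$ on the support $[a,1]$ and $\psi(s) \geq \lambda s + \nu_0$ on $[0,a)$. The equality on $[a,1]$ translates into an integral equation for $\bar H$ whose unique solution is exactly the closed form given in Section \ref{s31}, which accounts for the logarithmic shape and the apparent singularity at $x = a$; the threshold $a$ itself is pinned down by the mean constraint $a(1 - \ln a) = \mu$ of equation \eqref{2}.

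The principal obstacle is this final calibration. Deriving the explicit first variation $\psi$ of the quadratic functional $R(G)$ at $\bar G$, verifying that the prescribed $\bar H$ makes $\psi$ affine in $s$ on $[a,1]$, and checking the one-sided inequality $\psi(s) \geq \lambda s + \nu_0$ on $[0,a)$ constitute the technical heart of the argument. The inequality on $[0,a)$ is essential: without it, the adversarial Nature could potentially lower revenue by shifting mass into the interval outside the support of $\bar G$, so ruling this out is what ultimately justifies the choice of support for $\bar G$.
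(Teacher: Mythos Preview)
Your handling of C2 and C3 is essentially the paper's argument and is correct; the paper also runs the Myerson/envelope route for C2 (through a quantile reparameterization, but the substance is the same), and the revenue identity $R(\bar{\mathcal M},\bar G,\bar\sigma)=2a-a^2$ is fine.

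For C1 your plan diverges from the paper's, and the divergence exposes a real gap. You propose to view $R(G)$ as a functional of the \emph{measure} $dG$, attach two multipliers $(\lambda,\nu_0)$ for the mean and total-mass constraints, compute the first variation $\psi(s)$, and verify the KKT complementary-slackness conditions $\psi(s)=\lambda s+\nu_0$ on $[a,1]$ and $\psi(s)\ge\lambda s+\nu_0$ on $[0,a)$. Two problems arise. First, your ``representative form'' $R(G)=2\int_0^1[sg(s)-(1-G(s))]\bar H(s)G(s)\,ds$ presupposes a density $g$ and so cannot accommodate the atom of $\bar G$ at $1$; the formula is not well defined at the candidate minimizer. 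Second, and more serious, the KKT conditions you write down are only \emph{necessary} for optimality; they are sufficient only if the objective is convex in $dG$, and you never establish this. Without convexity a stationary point need not be a minimizer, so the plan as stated does not prove C1.

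The paper sidesteps both issues by working with the CDF value $G(x)$ rather than the measure $dG$. After integration by parts the revenue becomes
\[
\mathcal L(G,\bar H)=\int_0^1\Big\{(1-G^2(x))\big[x\bar H'(x)+\bar H(x)\big]-2\bar H(x)\,G(x)(1-G(x))\Big\}\,dx,
\]
which at each $x$ depends on $G$ only through the single number $G(x)\in[0,1]$. A single Lagrange multiplier for $\int_0^1(1-G(x))\,dx=\mu$ then suffices (no second multiplier is needed), and the resulting integrand is a scalar quadratic in $G(x)$ with leading coefficient $\bar H(x)-x\bar H'(x)$. The paper checks that this coefficient is strictly positive for all $x\in(0,1)$, so the quadratic is strictly convex and one can minimize \emph{pointwise}: the unconstrained minimizer equals $1-a/x$, which lies in $[0,1]$ for $x\ge a$ and is negative for $x<a$, forcing $G(x)=0$ there. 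This yields $\bar G$ as the global minimizer directly, and the off-support inequality on $[0,a)$ that you flag as the ``principal obstacle'' falls out automatically from convexity plus the sign of $1-a/x$. The missing ingredient in your plan is precisely this convexity fact $\bar H(x)-x\bar H'(x)>0$; once you have it, the pointwise-in-$G(x)$ formulation is both simpler and complete.
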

\subsection{Proof of Theorem \ref{t1}}\label{s35}
\subsubsection{Lower Bound on Revenue for $\bar{\mathcal{M}}$}\label{s351}
I first establish C1 in the definition of a maxmin solution.
\begin{proposition}\label{p1}
For any possible signal distribution $G$ and any equilibrium in undominated strategies $\sigma$ in $\Sigma(\bar{\mathcal{M}},G)$, $R(\bar{\mathcal{M}},G,\sigma)\ge \bar{R}$.
\end{proposition}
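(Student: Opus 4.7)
First, I would show that truth-telling is the unique equilibrium in undominated strategies in $\bar{\mathcal{M}}$. Since $\bar{\mathcal{M}}$ is the direct form of a second-price auction with random reserve $\bar{H}$, reporting one's true signal $s_i$ is weakly dominant (regardless of the realized reserve and the opponent's report), and any strategy assigning positive probability to non-truthful reports is weakly dominated. Hence $\Sigma(\bar{\mathcal{M}}, G) = \{\bar{\sigma}\}$ for every $G \in \mathcal{G}_F$, and the proposition reduces to showing $\Pi(G) \geq \bar{R}$ for all $G \in \mathcal{G}_F$, where $\Pi(G) := R(\bar{\mathcal{M}}, G, \bar{\sigma}) = \iint (\bar{t}_1 + \bar{t}_2)\, dG(s_1)\, dG(s_2)$.

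Next, I would recast $\Pi$ as a convex functional of $G$. The interim revenue admits the decomposition $\bar{t}_1(s_1, s_2) + \bar{t}_2(s_1, s_2) = A(\max(s_1, s_2)) + B(\min(s_1, s_2))$ with $A(s) := \int_0^s r\, d\bar{H}(r)$ and $B(s) := \int_0^s \bar{H}(r)\, dr$. Taking expectations against the two order statistics of $G$ and integrating by parts (using $A(1) + B(1) = \bar{H}(1) = 1$) yields
\[ \Pi(G) = 1 - 2 \int_0^1 \bar{H}(s)\, G(s)\, ds + \int_0^1 p(s)\, G(s)^2\, ds, \qquad p(s) := \bar{H}(s) - s\bar{H}'(s). \]
The crucial structural claim is that $p \geq 0$ on $[0,1]$. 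A direct calculation from the explicit form of $\bar{H}$ reduces this to showing $\psi(s) \leq 0$ for $\psi(s) := (s-a)(1 + \ln a) - (s\ln s - a \ln a)$; since $\psi'(s) = \ln(a/s)$, the function $\psi$ attains its maximum $\psi(a) = 0$ at $s = a$, so $\psi \leq 0$ and hence $p \geq 0$. Therefore $\Pi$ is convex in $G$.

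Because $\mathcal{G}_F$ is convex (the linear moment constraint $\int s\, dG = \mu$ plus the CDF requirement), the KKT conditions are sufficient for global minimality. I would verify them at $\bar{G}$: for some Lagrange multiplier $\lambda$, the variational derivative $-2\bar{H}(s) + 2 p(s) \bar{G}(s) + \lambda$ must vanish on $\mathrm{supp}(\bar{G}) = [a,1]$ and be non-negative on $[0, a)$. Substituting $\bar{G}(s) = 1 - a/s$ on $[a, 1)$ and invoking the defining relation $\mu = a(1 - \ln a)$, the on-support condition collapses algebraically to the constant $\lambda = 2\bar{H}(a)$; the off-support condition then reads $\bar{H}(s) \leq \bar{H}(a)$ on $[0, a)$, which holds by monotonicity of $\bar{H}$ (Lemma~\ref{l1}). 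Finally, $\Pi(\bar{G}) = \bar{R}$ follows from the equal-revenue property of $\bar{G}$: conditional on any reserve $r \in [0, 1]$, the expected revenue under iid sampling from $\bar{G}$ equals $2a - a^2$. I expect the main obstacle to be the inequality $p \geq 0$, since this is exactly what turns the a priori bilinear (and potentially non-convex) functional $\Pi$ into a convex one; the ODE satisfied by $\bar{H}$ is engineered precisely so that this inequality holds while the KKT condition is tight on $\mathrm{supp}(\bar{G})$.
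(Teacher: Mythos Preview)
Your proposal is correct and follows essentially the same route as the paper: both reduce to truth-telling, write the revenue as a pointwise quadratic in $G(x)$ with leading coefficient $\bar H(x)-x\bar H'(x)\ge 0$, introduce a Lagrange multiplier for the mean constraint, and verify that $\bar G$ is the (constrained) pointwise minimizer. The only differences are cosmetic---your $A(\max)+B(\min)$ decomposition and the $\psi$-computation for $p\ge 0$ versus the paper's direct integration-by-parts expression and the identity $\bar H-x\bar H'=\tfrac{x(\bar H+\frac{1-a}{\ln a})}{x-a}$---and your framing via KKT conditions is just a repackaging of the paper's pointwise quadratic minimization.
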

\begin{proof}
Given the mechanism $\bar{\mathcal{M}}$, truth-telling is the unique equilibrium in undominated strategies under any signal distribution. Therefore, 
I  focus on the truth-telling equilibrium and show that $\bar{G}$ minimizes the expected revenue across possible signal distributions. Given a signal profile $(s_1,s_2)$,  let $s(1)$ be $\max\{s_1,s_2\}$  and $s(2)$ be $\min\{s_1,s_2\}$. Then in the truth-telling equilibrium, the expected revenue given the mechanism $\bar{\mathcal{M}}$ and an arbitrary signal distribution $G$ can be expressed as follows:
\[\begin{split}
    E[t_1(s_1,s_2)+t_2(s_1,s_2)] & =\int_0^1\int_0^1 [s(1)\bar{H}(s(1))-\int_{s(2)}^{s(1)}\bar{H}(x)dx]dG(s_1)dG(s_2)\\
    &= \int_0^1\int_0^1 s(1)\bar{H}(s(1))dG(s_1)dG(s_2)-\int_0^1\int_0^1\int_0^1\bar{H}(x)\mathbbm{1}_{s(2)\le x \le s(1)}dxdG(s_1)dG(s_2)\\
    &= \int_0^1 \{\underbrace{(1-G^2(x))[x\bar{H}'(x)+\bar{H}(x)]}_{\text{the first term}}-\underbrace{\bar{H}(x)[2G(x)(1-G(x))]}_{\text{the second term}}\}dx,
\end{split}\]
where the first term  of the last line is obtained using integration by parts\footnote{$x\bar{H}(x)$ is differentiable everywhere by Lemma \ref{l1}.} and  the fact that $s(1)$'s cumulative distribution function is $G^2$, and the second term of the last line is obtained using Fubini's theorem and the fact that $s(2)$'s cumulative distribution function is $G^2+2G(1-G)$. Note that $\int_0^1(1-G(x))dx=\mu$ holds for any possible signal distribution, using integration by parts. Then to show that $\bar{G}$ minimizes $\mathcal{L}(G,\bar{H})\equiv \int_0^1 \{(1-G^2(x))[x\bar{H}'(x)+\bar{H}(x)]-\bar{H}(x)[2G(x)(1-G(x))]\}dx$, it suffices to show that there exists a real number $\lambda$ such that $\bar{G}$ minimizes \[\mathcal{L}(G,\bar{H},\lambda)\equiv\int_0^1 \{(1-G^2(x))[x\bar{H}'(x)+\bar{H}(x)]-\bar{H}(x)[2G(x)(1-G(x))]-\lambda(1-G(x))\}dx.\] This is because $\mathcal{L}(\bar{G},\bar{H},\lambda)\le \mathcal{L}(G,\bar{H},\lambda)$ implies $\mathcal{L}(\bar{G},\bar{H})\le \mathcal{L}(G,\bar{H})$ for any possible signal distribution $G$ by adding $\lambda\mu$ to both sides. Now take $\lambda=-\frac{2(1-a)}{\ln{a}}$. I am going to point-wise minimize the integrand of $\mathcal{L}(G,\bar{H},\lambda)$ for any $x\neq 0,a$ or 1.   First,  with slight rewriting,  the integrand of $\mathcal{L}(G,\bar{H},\lambda)$ becomes \[
\mathcal{I}(G,\bar{H},\lambda)\equiv [\bar{H}(x)-x\bar{H}'(x)]G^2(x)-2[\bar{H}(x)+\frac{1-a}{\ln{a}}]G(x)+\bar{H}(x)+x\bar{H}'(x)+\frac{2(1-a)}{\ln{a}}.\tag{3} \label{3}\]
This is a simple quadratic function of $G(x)$. By simple calculation, $\bar{H}(x)-x\bar{H}'(x)=\frac{x[\bar{H}(x)+\frac{1-a}{\ln{a}}]}{x-a}>0$ for any $x\neq 0$ or $a$, following from $\bar{H}(x)$ being strictly increasing and $\bar{H}(a)=-\frac{1-a}{\ln{a}}$. Then,  for any $a<x<1$, $G(x)=1-\frac{a}{x}$ is the unique minimizer of \eqref{3}. For any $0<x<a$, $G(x)=0$ is the unique minimizer of \eqref{3} as $1-\frac{a}{x}<0$ if $0<x<a$.  This implies that  the signal distribution $\bar{G}$ minimizes $\mathcal{L}(G,\bar{H},\lambda)$ and therefore minimizes the expected revenue across possible signal distributions. By simple calculation, the minimized expected revenue is equal to $2a-a^2$. The details about the construction of the distribution $\bar{H}$ as well as the Lagrangian multiplier $\lambda$ are provided below
\end{proof}
\textit{\large \textbf{Construction of $\bar{H}$ and $\lambda$}}. Consider a second-price auction with a random reserve whose cumulative distribution function is $H$. Assuming that $H(x)$ is differentiable at any $x\in (0,1]$ and $\lim_{x\to 0}xH(x)$ exists, then the expected revenue under a signal distribution $G$ can be expressed as $\mathcal{L}(G,H)$. I subtract $\lambda\mu$ from $\mathcal{L}(G,H)$ where $\lambda$ is some real number,  and obtain $\mathcal{L}(G,H,\lambda)$. Then a sufficient condition for $\bar{G}$ to be a minimizer of $\mathcal{L}(G,H)$ is that $\bar{G}(x)$ point-wise minimizes the integrand of $\mathcal{L}(G,H,\lambda)$. The first order condition with respect to $G(x)$ is as follows:
\[-2G(x)[xH'(x)+H(x)]-2H(x)(1-2G(x))+\lambda=0.\tag{4}\label{b1}\]
Plugging $G(x)=1-\frac{a}{x}$ to \eqref{b1}, 
\[(x-a)H'(x)+\frac{a}{x}\cdot H(x)=\frac{\lambda}{2}. \tag{5}\label{5}\]
Solving this differential equation: for any $x\neq 0$ or $a$, \[
H(x)=\frac{x(\frac{\lambda}{2}\ln{x}+c)}{x-a}, \tag{6}\label{b3}\]
where $c$ is some constant. Using $H(1)=1$, I obtain that $c=1-a$. In order for $\lim_{x\to a}H(a)$ to exist, the nominator of \eqref{b3} has to be 0 when $x=a$, leading to that $\lambda=-\frac{2(1-a)}{\ln{a}}$.
\begin{remark}
\normalfont The assumption that bidders play undominated strategies is important for this result. Without this assumption, there indeed exist a possible signal distribution and an equilibrium such that the expected revenue in the mechanism $\bar{\mathcal{M}}$ is lower than $\bar{R}$. Take the signal distribution to be a point mass on $\mu$, i.e., bidders observe no signal at all, and consider the strategy profile where one bidder reports $\mu$ and the other bidder reports 0. It is straightforward to verify that this strategy profile  is an equilibrium (in which one of the bidders uses a dominated strategy). The expected revenue in this equilibrium under this signal distribution is $\mu \cdot \bar{H}(\mu)-\int_0^{\mu}\bar{H}(x)dx$, which is lower than $2a-a^2$ for any $\mu$. For a parametric example, when $\mu=0.5$, the former is about 0.1223, whereas the latter is about 0.3385.
\end{remark}
\subsubsection{Upper Bound on Revenue for $\bar{G}$}\label{s352}
Next,  I  establish C2 in the definition of a maxmin solution.
\begin{proposition}\label{p2}
For any participation-securing mechanism $\mathcal{M}$ and any equilibrium in undominated strategies $\sigma$ in $\Sigma(\mathcal{M},\bar{G})$,  $R(\mathcal{M},\bar{G},\sigma)\le \bar{R}$.
\end{proposition}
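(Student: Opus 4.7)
The plan is to prove Proposition \ref{p2} by reducing an arbitrary participation-securing mechanism with an undominated BNE to a BIC, interim-IR direct mechanism, and then applying a Myerson-style virtual-surplus bound tailored to the equal-revenue distribution $\bar{G}$.

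First, I would apply the revelation principle. Given any $(\mathcal{M},\sigma)$ with $\sigma\in\Sigma(\mathcal{M},\bar{G})$, define the direct mechanism with allocation $\tilde{q}_i(s_1,s_2)=E[q_i(\sigma_1(s_1),\sigma_2(s_2))]$ and transfer $\tilde{t}_i(s_1,s_2)=E[t_i(\sigma_1(s_1),\sigma_2(s_2))]$. Because $\sigma$ is a BNE of $\mathcal{M}$, truth-telling is Bayesian incentive compatible in this direct mechanism and the expected revenue is preserved, $R(\mathcal{M},\bar{G},\sigma)=\sum_i E[\tilde{t}_i(s_1,s_2)]$. Participation security then yields interim individual rationality: the guaranteed non-negative payoff from sending $m_i=0$ against \emph{any} opponent strategy, combined with the best-response condition \eqref{br}, forces $U_i(s_i)\equiv s_i\tilde{Q}_i(s_i)-\tilde{T}_i(s_i)\ge 0$ for every $s_i$ in the support $[a,1]$ of $\bar{G}$.

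Next, I would derive a Myerson envelope identity adapted to $\bar{G}$. BIC implies $U_i$ is convex with $U_i'=\tilde{Q}_i$ almost everywhere, so $U_i(s_i)=U_i(a)+\int_a^{s_i}\tilde{Q}_i(x)\,dx$. Hence $E[\tilde{t}_i]=E[s_i\tilde{Q}_i(s_i)]-U_i(a)-E[\int_a^{s_i}\tilde{Q}_i(x)\,dx]$, and a Fubini manipulation using the density $\bar{g}(s)=a/s^2$ on $(a,1)$ together with the atom $\Pr(s_i=1)=a$ makes the two continuous integrals cancel exactly, leaving
\[E[\tilde{t}_i(s_1,s_2)]=a\,\tilde{Q}_i(1)-U_i(a).\]
Equivalently, setting $\psi(s)=0$ for $s\in[a,1)$ and $\psi(1)=1$, one has $E[\tilde{t}_i(s_1,s_2)]=E[\psi(s_i)\tilde{q}_i(s_1,s_2)]-U_i(a)$.

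Finally, I would combine the two bidders' identities via a pointwise bound. Using feasibility $\tilde{q}_1+\tilde{q}_2\le 1$, in every $(s_1,s_2)$
\[\sum_i\psi(s_i)\tilde{q}_i(s_1,s_2)\le \mathbbm{1}\{\max(s_1,s_2)=1\}\cdot\bigl(\tilde{q}_1(s_1,s_2)+\tilde{q}_2(s_1,s_2)\bigr)\le \mathbbm{1}\{\max(s_1,s_2)=1\}.\]
Taking expectations and using $U_i(a)\ge 0$ gives
\[R(\mathcal{M},\bar{G},\sigma)\le \Pr(\max(s_1,s_2)=1)=1-(1-a)^2=2a-a^2=\bar{R}.\]
The main obstacle is the envelope computation in the third paragraph: verifying that the envelope formula for $U_i$ extends globally on $[a,1]$ including at the atom at $s=1$, and that the Fubini calculation cancels the continuous integrals with the atomic contribution producing the correct $\psi(1)=1$ weight. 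This is precisely the equal-revenue structure $(1-\bar{G}(s))/\bar{g}(s)=s$ on $(a,1)$; once this identity is in hand, the remainder is a direct Myerson-style ex-post argument against the feasibility constraint.
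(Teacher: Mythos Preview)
Your proposal is correct and follows essentially the same Myerson virtual-value argument as the paper: revelation principle, envelope formula, the observation that the virtual value under $\bar{G}$ vanishes on $(a,1)$ and equals one at the atom $s=1$, then the pointwise feasibility bound $\sum_i\psi(s_i)\tilde{q}_i\le\mathbbm{1}\{\max_i s_i=1\}$ giving $2a-a^2$. The only difference is notational---the paper reparameterizes types by their quantiles $z_i\in[0,1]$ to absorb the atom, whereas you work directly in signal space and carry the atom through the Fubini computation; both routes yield the same identity $E[\tilde{t}_i]=a\,\tilde{Q}_i(1)-U_i(a)$.
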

\begin{proof}
Indeed, I will establish a slightly stronger statement: $\bar{R}$ is an upper bound on the expected revenue for any participation-securing mechanism and any equilibrium. \\
\indent  First, to identify an upper bound on the expected given a signal distribution, it is without loss to restrict attention to direct mechanisms, i.e., $M=[0,1]^2$, as the revelation principle holds. Next, for exposition, I parameterize each type $s_i$ by its quantile $z_i$.\footnote{See \cite{carroll2017robustness} and \cite{zhang2022auctioning} for a similar method.} Formally, I define the inverse quantile function as follows: \[\bar{s}(z_i)=\min\{\tilde{s_i}|\bar{G}(\tilde{s_i})\ge z_i\}=\left\{
\begin{array}{lll}
\frac{a}{1-z_i}    &      & {\text{if $0\le z_i<1-a$},}\\
1    &      & {\text{if $z_i\ge 1-a$}.}
\end{array} \right.\]
I denote the signal profile by $z=(z_1,z_2)\in [0,1]^2$. Note that $z_1$ and $z_2$ follow independently  and identically  distributed uniform distributions on $[0,1]$. Then, for any direct mechanism $(q_i(z),t_i(z))_{i\in\{1,2\}}$ where $q_i(z)\in [0,1]$ is the allocation probability to  bidder $i$ given the parameterized signal profile $z$ and $t_i(z)\in \mathrm{R}$ is the payment made by bidder $i$ given $z$,  \eqref{br} and \eqref{ps} together imply that for all $i$, all $z_i$, and all $z_i'$,\[U_i(z_i)\equiv \bar{s}(z_i)Q_i(z_i)-T_i(z_i)\ge \bar{s}(z_i)Q_i(z_i')-T_i(z_i'), \tag{BIC}\label{bic}\]
\[\bar{s}(z_i)Q_i(z_i)-T_i(z_i)\ge 0, \tag{BIR}\label{bir}\]
where $Q_i(z_i)=\int_{z_{-i}}q_i(z_i,z_{-i})dz_{-i}$ and $T_i(z_i)=\int_{z_{-i}}t_i(z_i,z_{-i})dz_{-i}$ are the expected allocation to type $z_i$ of bidder $i$ and the expected payment made by type $z_i$ of bidder $i$,  respectively.\\ \indent For $z_i'\ge z_i$,  \eqref{bic} implies that  \[
(\bar{s}(z_i')-\bar{s}(z_i))Q_i(z_i')\ge U_i(z_i')-U_i(z_i)\ge (\bar{s}(z_i')-\bar{s}(z_i))Q_i(z_i).\tag{2}\label{4}\]
Then  $U_i(z_i)$ is Lipschitz, thus absolutely continuous w.r.t. $z_i$, and so equal to the integral of its
derivative. In addition, note that $\bar{s}(z_i)$ is differentiable for all $z_i$ except for  $z_i=1-a$. Then applying the envelope theorem to \eqref{4} at each point of differentiability, I obtain that 
\[\frac{\partial U_i(z_i)}{\partial z_i}=\frac{\partial \bar{s}(z_i)}{\partial z_i}Q_i(z_i)=\left\{
\begin{array}{lll}
\frac{a}{(1-z_i)^2}Q_i(z_i)    &      & {\text{if $0\le z_i<1-a$},}\\
0    &      & {\text{if $z_i> 1-a$}.}
\end{array} \right.\]
Thus, \[U_i(z_i)=\left\{
\begin{array}{lll}
U_i(0)+\int_0^{z_i}[ \frac{a}{(1-\tilde{z_i})^2}Q_i(\tilde{z_i})] d\tilde{z_i}   &      & {\text{if $0\le z_i<1-a$},}\\
U_i(0)+\int_0^{1-a}[ \frac{a}{(1-\tilde{z_i})^2}Q_i(\tilde{z_i})] d\tilde{z_i}  &      & {\text{if $z_i\ge 1-a$}.}
\end{array} \right.\]
Therefore,   the expected revenue from bidder $i$  satisfies 
\[
\begin{split}
\int_0^1T_i(z_i)dz_i & = \int_0^1[\bar{s}(z_i)Q_i(z_i)-U_i(z_i)]dz_i \\
 & = \int_0^{1-a}\{\bar{s}(z_i)Q_i(z_i)-U_i(0)-\int_0^{z_i}[\frac{a}{(1-\tilde{z_i})^2}Q_i(\tilde{z_i})] d\tilde{z_i}\}dz_i+\\
 &\int_{1-a}^1\{\bar{s}(z_i)Q_i(z_i)-U_i(0)-\int_0^{1-a}[\frac{a}{(1-\tilde{z_i})^2}Q_i(\tilde{z_i})] d\tilde{z_i}\}dz_i\\
 &\le \int_0^{1-a}\{\bar{s}(z_i)Q_i(z_i)-\int_0^{z_i}[\frac{a}{(1-\tilde{z_i})^2}Q_i(\tilde{z_i})] d\tilde{z_i}\}dz_i+\\
 &\int_{1-a}^1\{\bar{s}(z_i)Q_i(z_i)-\int_0^{1-a}[ \frac{a}{(1-\tilde{z_i})^2}Q_i(\tilde{z_i})] d\tilde{z_i}\}dz_i\\
 &=\{\int_0^{1-a}[(\bar{s}(z_i)-(1-a-z_i)\frac{a}{(1-z_i)^2})Q_i(z_i)]dz_i+\\
 &\int_{1-a}^1[\bar{s}Q_i(z_i)-\int_0^{1-\frac{1}{e}}[\frac{a}{(1-\tilde{z_i})^2}Q_i(\tilde{z_i})] d\tilde{z_i}]dz_i\}\\
 &=\{\int_0^{1-a}[(\bar{s}(z_i)-(1-z_i)\frac{a}{(1-z_i)^2})Q_i(z_i)]dz_i+\int_{1-a}^1[(\bar{s}(z_i)Q_i(z_i)]dz_i\}\\
 &=\int_{1-a}^1Q_i(z_i)dz_i,
\end{split}
\]
where the first inequality holds because (\ref{bir}) implies that $U_i(0)\ge 0$, the third equality is obtained via integration by parts, the last equality holds because $\bar{s}(z_i)-(1-z_i)\frac{a}{(1-z_i)^2}=0$ for $0\le z_i< 1-a$ and $\bar{s}(z_i)=1$ for $z_i>1-a$.\\
\indent Then, the expected revenue from all the bidders satisfies
\[
\begin{split}
 \sum_{i=1}^2\int_0^1T_i(z_i)dz_i & =\int_{0}^{1-a}\int_{1-a}^1 q_1(z_1,z_2)dz_1dz_2+\int_{0}^{1-a}\int_{1-a}^1 q_2(z_1,z_2)dz_2dz_1\\ &+\int_{1-a}^{1}\int_{1-a}^{1} [q_1(z_1,z_2)+q_2(z_1,z_2)]dz_1dz_2 \\ & \le 2a(1-a)+a^2=2a-a^2,   
\end{split}
\]
where the  inequality holds because $q_i(z)\le 1$ and  $q_1(z)+q_2(z)\le 1$ for all $i$ and $z$. This finishes the proof.
\end{proof}
\indent This argument is  standard  \citep{myerson1981optimal}. The parameterization makes the proof clean. \\
\indent Finally, recall that the truth-telling strategy profile $\bar{\sigma}$ is an equilibrium in undominated strategies in the mechanism $\bar{\mathcal{M}}$ under the signal distribution $\bar{G}$, as the mechanism $\bar{\mathcal{M}}$ is a dominant-strategy mechanism. Then,  Theorem \ref{t1} follows immediately from Proposition \ref{p1} and \ref{p2}.
\section{Discussion and Extension}\label{s4}
\subsection{Value of randomization}
Recall that \cite{suzdaltsev2020distributionally} characterizes the optimal deterministic reserve price for the second-price auction in the same framework, which is zero in my setting. For the two-bidder case, my proposed mechanism, which involves randomization,  achieves a strictly higher revenue guarantee. For a parametric example, if $\mu=0.5$, then the revenue guarantee under my proposed mechanism is about  0.3385, whereas the one under his mechanism is 0.25. Intuitively, randomization hedges against uncertainty towards signal distributions, rendering a higher revenue guarantee. To my knowledge, it is an open question what the optimal deterministic mechanism is in this setting. Therefore, the difference between the  revenue guarantee of my mechanism  and the one of his can be interpreted as an upper bound on the ``value of randomization''. 
\subsection{Cost of correlation}
Recall that \cite{che2019distributionally} finds that a second-price auction with a random reserve maximizes the revenue guarantee across a wide range of mechanisms in the private value environment in which the seller only knows the expectation of the value distribution. That is, values across bidders can be correlated. As expected, the revenue guarantee of my proposed mechanism is strictly higher in my setting for the two-bidder case than the one in his setting. This is because the seller in my setting knows more: he knows that the correlation structure is the independent one. For a parametric example, if $\mu=0.5$,  then the revenue guarantee of his mechanism  in his setting is about 0.317, which is strictly smaller than 0.3385. Intuitively, independently and identically distributed signal distributions makes my setting more competitive. Indeed, in his worst-case value distribution, the competitor of the high-valuation bidder always has the lowest possible valuation. To my knowledge, it is not known what the maxmin mechanism is across all participation-securing mechanisms (even across all dominant-strategy mechanisms) in his setting. Therefore, the difference between the  revenue guarantee in my paper  and the one in his paper can be interpreted as an upper bound on the ``cost of correlation''. 
\subsection{Other maxmin solutions}
\begin{corollary}
Let $\bar{\mathcal{M}}^*$ be a second price auction with a random reserve whose cumulative distribution is  $H^*$. Then $(\bar{\mathcal{M}}^*, \bar{G}, \bar{\sigma})$ is a maxmin solution with a revenue guarantee of $\bar{R}$ if $H^*$ satisfies the following properties:\\
P1. $H^*(x)=\bar{H}(x)$ for $x\ge a$.\\
P2. $H^*(x)-x\cdot(H^*)'(x)\ge 0$ for $x<a$.
\end{corollary}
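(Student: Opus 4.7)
The plan is to mirror the proof of Theorem \ref{t1} with $H^*$ in place of $\bar{H}$, noting that only condition C1 requires new work. Since $\bar{\mathcal{M}}^*$ is still a second-price auction (with a possibly different random reserve), it is a participation-securing dominant-strategy mechanism, so truth-telling $\bar{\sigma}$ is the unique equilibrium in undominated strategies under every signal distribution, which both verifies C3 and makes Proposition \ref{p2} directly applicable to give C2. It therefore remains only to show that, fixing $\bar{\mathcal{M}}^*$, $\bar{G}$ minimizes the truth-telling expected revenue over $\mathcal{G}_F$, and that this minimum equals $\bar{R}$.

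For the lower bound, I would carry out the same Lagrangian argument as in Proposition \ref{p1} with multiplier $\lambda=-\tfrac{2(1-a)}{\ln a}$ and $H^*$ replacing $\bar{H}$, leading to the integrand
\[
\mathcal{I}(G,H^*,\lambda)=[H^*(x)-xH^{*\prime}(x)]G^2(x)-2\bigl[H^*(x)+\tfrac{1-a}{\ln a}\bigr]G(x)+H^*(x)+xH^{*\prime}(x)+\tfrac{2(1-a)}{\ln a},
\]
and argue pointwise minimization. For $x\ge a$, P1 gives $H^*=\bar{H}$, so the proof of Proposition \ref{p1} applies verbatim and $\bar{G}(x)=1-a/x$ is the unique minimizer. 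For $x<a$, P2 makes the coefficient of $G^2(x)$ non-negative, and since $H^*$ is nondecreasing with $H^*(a)=\bar{H}(a)=-\tfrac{1-a}{\ln a}$ by P1, we also have $H^*(x)+\tfrac{1-a}{\ln a}\le 0$, which makes the coefficient of $G(x)$ non-negative. Thus the quadratic is either strictly convex with vertex at $G=\tfrac{H^*(x)+\tfrac{1-a}{\ln a}}{H^*(x)-xH^{*\prime}(x)}\le 0$, or linear with non-negative slope; in either case $G(x)=0=\bar{G}(x)$ is the minimizer on $[0,1]$.

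Finally, I would evaluate $\mathcal{L}(\bar{G},H^*)$ to pin down the value. On $[0,a)$ the integrand reduces to $xH^{*\prime}(x)+H^*(x)=(xH^*(x))'$, whose integral equals $aH^*(a)-\lim_{x\to 0}xH^*(x)=a\bar{H}(a)$ by P1 and the boundedness of $H^*$; on $[a,1]$ the integrand coincides with that in the proof of Proposition \ref{p1}. Hence $\mathcal{L}(\bar{G},H^*)=\mathcal{L}(\bar{G},\bar{H})=2a-a^2=\bar{R}$, which together with the pointwise argument above gives C1 at the value $\bar{R}$.

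The only mild obstacle is the sign bookkeeping on $[0,a)$: one must combine P2 with the inequality $H^*(x)\le -\tfrac{1-a}{\ln a}$ (which comes from monotonicity plus P1) to simultaneously place the vertex of the quadratic weakly to the left of the origin and handle the degenerate linear case; everything else is either a verbatim reuse of the proof of Theorem \ref{t1} or follows from the fact that $\bar{G}$ puts no mass on $[0,a)$, which makes the behavior of $H^*$ below $a$ irrelevant to the value $\bar{R}$ so long as it is compatible with the pointwise minimization.
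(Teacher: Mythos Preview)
Your proposal is correct and follows essentially the same approach as the paper's proof: both invoke Proposition~\ref{p1} (via P1) on $[a,1]$ and use P2 together with P1 (monotonicity of $H^*$ and $H^*(a)=\bar H(a)$) to make the quadratic $\mathcal{I}(G,H^*,\lambda)$ weakly convex with its vertex at or left of $0$ on $[0,a)$. You are simply more explicit than the paper about the sign bookkeeping and the degenerate linear case, and you add the (correct) verification that $\mathcal{L}(\bar G,H^*)=\bar R$, which the paper leaves implicit.
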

\begin{proof}
By the proof of Proposition \ref{p1}, the property $P1$ implies that for any $a<x<1$, $G(x)=1-\frac{a}{x}$ is the unique minimizer of $\mathcal{I}(G, H^*,\lambda)$. Moreover,   the properties $P1$ and $P2$ together  imply that for any $0<x<a$, $G(x)=0$ is the unique minimizer of $\mathcal{I}(G, H^*,\lambda)$,  as $\mathcal{I}(G, H^*,\lambda)$ is a (weakly) convex function of $G$.
\end{proof}
\indent In a maxmin solution,  there is some flexibility for the distribution of a random reserve when the reserve is below $a$.   One example that satisfies $P2$ is that $H^*(x)=\bar{H}(a)$ for $x<a$, i.e., there is a probability mass of size $\bar{H}(a)$ on zero. 
\subsection{Knowing the second moment}
Suppose that the seller knows only the second moment of the signal distribution as well as that the two signals are independently and identically distributed. Let $\delta$ denote the known second moment, i.e., $\delta=\int_0^1x^2dG(x)$. Then I will show that the second-price auction with the uniformly distributed random reserve ($\hat{\mathcal{M}}$), the equal-revenue distribution with $a=1-\sqrt{1-\delta}$ ($\hat{G}$), and the truth-telling strategy profile ($\hat{\sigma}$) constitute a maxmin solution in this case.
\begin{corollary}
 If the seller knows only the second moment of the signal distribution as well as that the two signals are independently and identically distributed, then $(\hat{\mathcal{M}},\hat{G},\hat{\sigma})$ is a maxmin solution with a revenue guarantee of $\delta$.
\end{corollary}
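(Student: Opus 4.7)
The plan is to follow the same two-pronged template as the proof of Theorem~\ref{t1}: verify condition C1 (revenue lower bound for $\hat{\mathcal{M}}$) and condition C2 (revenue upper bound for $\hat{G}$), with condition C3 being immediate since truth-telling is dominant in a second-price auction. The organizing observation is that $a = 1 - \sqrt{1-\delta}$ is chosen precisely so that $\int_0^1 x^2\, d\hat{G}(x) = \int_a^1 x^2 \cdot (a/x^2)\, dx + a = a(1-a) + a = 2a - a^2 = \delta$, which simultaneously makes $\hat{G}$ a feasible signal distribution under the second-moment constraint and identifies $2a - a^2 = \delta$ as the target revenue guarantee.

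For C1, I would specialize the revenue formula from the proof of Proposition~\ref{p1} to the uniform random reserve $\hat{H}(x) = x$ (so $x\hat{H}'(x)+\hat{H}(x) = 2x$ and the differentiability hypotheses are trivially met). Since $\hat{\mathcal{M}}$ is a second-price auction, truth-telling is the unique equilibrium in undominated strategies under any signal distribution, so for any feasible $G$,
\[
R(\hat{\mathcal{M}}, G, \hat{\sigma}) = \int_0^1 \bigl\{2x(1-G^2(x)) - 2xG(x)(1-G(x))\bigr\}\, dx = \int_0^1 2x(1-G(x))\, dx.
\]
A single integration by parts collapses this to $\int_0^1 x^2\, dG(x) = \delta$. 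The revenue is thus constant at $\delta$ across all feasible $G$, so the minimization step is vacuous and no Lagrangian is needed.

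For C2, I would invoke the argument from the proof of Proposition~\ref{p2} essentially verbatim. That proof relied only on $\bar{G}$ being an equal-revenue distribution with quantile function $\bar{s}(z_i) = a/(1-z_i)$ on $[0, 1-a)$ and an atom of size $a$ at $1$, together with BIC and BIR (which follow from participation security). Since $\hat{G}$ has precisely the same structural form with $a = 1 - \sqrt{1-\delta}$, the pivotal cancellation $\bar{s}(z_i) - (1-z_i)\cdot a/(1-z_i)^2 = 0$ on $[0, 1-a)$ still holds, the per-bidder bound $\int_0^1 T_i(z_i)\, dz_i \le \int_{1-a}^1 Q_i(z_i)\, dz_i$ is derived identically, and summing over bidders with the feasibility constraint $q_1+q_2\le 1$ yields total revenue at most $2a(1-a) + a^2 = 2a - a^2 = \delta$.

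The main step --- though not a computational obstacle --- is the structural recognition that the uniform reserve is the right candidate: the identity $\int_0^1 2x(1-G(x))\, dx = \int_0^1 x^2\, dG(x)$ says that a uniformly distributed reserve converts any signal distribution's revenue into its second moment, which is exactly the seller's sole piece of information. Matching this to the equal-revenue ceiling via $a = 1 - \sqrt{1-\delta}$ then closes the bounds at $\delta$. Beyond this structural observation, the Lagrangian derivation underlying Proposition~\ref{p1} and the quantile reparameterization underlying Proposition~\ref{p2} transfer without modification.
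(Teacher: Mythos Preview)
Your proposal is correct and follows essentially the same approach as the paper. The only cosmetic difference is that for C1 you simplify the revenue integral directly to $\int_0^1 2x(1-G(x))\,dx = \delta$, whereas the paper phrases the identical computation via a Lagrangian with $\lambda=1$ (and also notes the equivalent observation that the interim revenue equals $(s_1^2+s_2^2)/2$); for C2 both you and the paper appeal to the proof of Proposition~\ref{p2}, which depends only on the equal-revenue form of the signal distribution.
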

\begin{proof}
It suffices to show that $\hat{G}$ minimizes the expected revene in the truth-telling equilibrium under the mechanism $\hat{\mathcal{M}}$. First, 
using integration by parts, the constraint  $\delta=\int_0^1x^2dG(x)$ can be rewritten as $\delta=2\int_0^1x(1-G(x))dx$. Similar to the proof of Proposition \ref{p1}, I construct a Lagrangian as follows:\[\hat{\mathcal{L}}(G,\hat{H},\lambda)\equiv \int_0^1 \{(1-G^2(x))[x\hat{H}'(x)+\hat{H}(x)]-\hat{H}(x)[2G(x)(1-G(x))]-2\lambda x(1-G(x))\}dx. \]
Let $\lambda$ be 1. It is straightforward that the integrand of $\hat{\mathcal{L}}(G,\hat{H},\lambda)$ is a constant of  0 because $\hat{H}(x)=x$, implying that any signal distribution with the known second moment yields the same expected revenue in the truth-telling equilibrium under the mechanism $\hat{\mathcal{M}}$. Alternatively, 
 observe  that the interim revenue at the signal profile $(s_1,s_2)$ is $\frac{s_1^2+s_2^2}{2}$ in the truth-telling equilibrium under the mechanism $\hat{\mathcal{M}}$. This also implies that the expected revenue is $\delta$ for any signal distribution with the known second moment. 
\end{proof}
\subsection{Beyond Binary Prior Valuations}
Suppose that there are more than two possible prior valuations. Recall that $G$ is a possible signal distribution if and only if $F$ is a mean-preserving-spread of $G$ \citep{blackwell1953equivalent}. Formally, the set of possible signal distributions can now be defined as follows:
\[\mathcal{G}_F=\{G:[0,1]\to [0,1]|\int_0^xF(s)ds\ge \int_0^xG(s)ds,\forall x\in [0,1], \int _0^1sdG(s)=\mu \quad\text{and $G$ is a CDF}\}.\]
\begin{corollary}\label{c1}
If $\int_0^xF(s)ds\ge \int_0^x\bar{G}(s)ds$ for all $x\in [0,1]$, then $(\bar{\mathcal{M}},\bar{G},\bar{\sigma})$ is a maxmin solution with a revenue guarantee of $\bar{R}$.
\end{corollary}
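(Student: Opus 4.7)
The plan is to reduce the proof to the two propositions already established, since the generalization to $F$ having more than two points in its support affects only which signal distributions are admissible, not the mechanism or the candidate worst-case distribution. The overall strategy is to verify the three conditions C1, C2, C3 of the maxmin solution definition in turn, leveraging Proposition \ref{p1} and Proposition \ref{p2} as black boxes.

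First I would verify that, under the stated hypothesis, $\bar{G}$ is still a possible signal distribution, i.e., $\bar{G}\in\mathcal{G}_F$ in the generalized sense. This requires two facts: $\int_0^1 s\, d\bar{G}(s)=\mu$, which holds by the choice of $a$ solving equation \eqref{2}; and the mean-preserving-spread inequality $\int_0^x F(s)ds \ge \int_0^x \bar{G}(s)ds$ for all $x$, which is exactly the hypothesis of the corollary. The key structural observation is that the set $\mathcal{G}_F$ in this generalized setting is a \emph{subset} of the set of all CDFs on $[0,1]$ with mean $\mu$, because the MPS condition is an extra restriction beyond the mean constraint.

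For C1, I would argue that for any $G\in \mathcal{G}_F$ and any $\sigma\in\Sigma(\bar{\mathcal{M}},G)$, we have $R(\bar{\mathcal{M}},G,\sigma)\ge \bar{R}$. Since $\bar{\mathcal{M}}$ is dominant-strategy, $\sigma$ must be truth-telling, and since $G$ has mean $\mu$, it belongs to the larger set over which Proposition \ref{p1} already established the lower bound $\bar{R}$; minimizing over a subset can only (weakly) raise the minimum, so the bound is inherited. For C2, I would observe that Proposition \ref{p2}'s proof bounds the revenue of an arbitrary participation-securing mechanism under the specific signal distribution $\bar{G}$ using only the inverse quantile function $\bar{s}(\cdot)$ of $\bar{G}$ together with BIC and BIR for a bidder whose type is drawn from $\bar{G}$; the argument never invokes $F$. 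Hence C2 carries over verbatim. C3 is immediate, again because $\bar{\mathcal{M}}$ is dominant-strategy, so $\bar{\sigma}$ is the unique equilibrium in undominated strategies.

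The only real delicate point, and the step I would dwell on, is making explicit that Proposition \ref{p2} is genuinely a statement about $\bar{G}$ alone and does not secretly use the binary structure of $F$. Aside from this verification, everything else is a direct appeal to the already-established propositions together with the elementary observation that tightening the admissible class of signal distributions preserves the lower bound on revenue guaranteed by $\bar{\mathcal{M}}$.
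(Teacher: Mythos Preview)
Your proof is correct and follows essentially the same approach as the paper: the paper simply notes that the hypothesis ensures $\bar{G}\in\mathcal{G}_F$ and then appeals to Theorem~\ref{t1}, while you unpack that appeal by verifying C1--C3 via Propositions~\ref{p1} and~\ref{p2} and the subset observation. The only difference is that you make explicit the point that $\mathcal{G}_F$ shrinks (so C1 is inherited) and that Proposition~\ref{p2} depends only on $\bar{G}$ (so C2 carries over), whereas the paper leaves these implicit in the phrase ``follows immediately from Theorem~\ref{t1}.''
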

\begin{proof}
Note that this condition guarantees that $\bar{G}$ is a possible signal distribution. Then Corollary \ref{c1} follows immediately from Theorem \ref{t1}.
\end{proof}
\indent To illustrate, consider an example in which $\mu=\frac{1}{2}$ and there are three possible prior valuations: $0$, $\frac{1}{2}$ and 1. Their probabilities are $b$, $1-2b$ and $b$, respectively. By simple calculation, \[\int_0^x\bar{G}(s)ds=\left\{
\begin{array}{lll}
x-a-a\ln{x}+a\ln{a}   &      & {\text{if $x\ge a$},}\\
0    &      & {\text{if $x< a$},}
\end{array} \right.\]
where $a(1-\ln{a})=\frac{1}{2}$ in this example.
\[\int_0^xF(s)ds=\left\{
\begin{array}{lll}
bx   &      & {\text{if $x\le \frac{1}{2}$},}\\
\frac{1}{2}b+(x-\frac{1}{2})(1-b)    &      & {\text{if $x> \frac{1}{2}$}.}
\end{array} \right.\]
It is straightforward to show that if $0.2588\approx -2a\ln{\frac{1}{2}}\le b\le\frac{1}{2}$, this distribution is a mean-preserving spread of $\bar{G}$.\\
\indent For another example, suppose $\mu= \frac{3}{4}$. Consider a prior valuation distribution  which is a combination of a uniform distribution on $[0,1)$ and an atom of size $\frac{1}{2}$ on 1. Then, it can be shown that $\int_0^x\bar{G}(s)ds- \int_0^xF(s)ds$ is decreasing in $x$ if $a\le x\le  1-\sqrt{1-2a}\approx 0.515$, and is increasing in $x$ for $1-\sqrt{1-2a}<x\le 1$. Here $a(1-\ln{a})=\frac{3}{4}$.  This implies that $\int_0^x\bar{G}(s)ds- \int_0^xF(s)ds\le 0$ for any $x\in [0,1]$, as $\int_0^1\bar{G}(s)ds- \int_0^1F(s)ds= 0$.  Therefore,  this distribution is also a mean-preserving spread of $\bar{G}$. 
\section{Concluding Remarks}\label{s5}
In this paper, I find that a second-price auction with a random reserve is a maxmin mechanism across all participation-securing mechanisms for the two-bidder case. The key step of the result is the construction of a saddle point, which implies that (a version of) strong duality holds for two-bidder case in my setting. It remains an open question that if strong duality holds for more-than-two-bidder cases. This paper provides the first step towards a broad study of robust auction design problems in the independent private value model.

\bibliographystyle{apalike}
\bibliography{abc}

\end{document}